\DeclareMathOperator{\sgn}{sgn}
\DeclareMathOperator{\comm}{comm}
\begin{document}

\def\newstr{\par\noindent}
\def\ms{\medskip\newstr}

\renewcommand{\proofname}{Proof}
\renewcommand{\labelenumi}{$\bullet$}

\newcommand{\ifdraft}[1]{#1}
\definecolor{aocolour}{rgb}{0.7,0.8,1}
\definecolor{vmcolour}{rgb}{1,0.8,0.7}
\newcommand{\ao}[1]{\ifdraft{\noindent\colorbox{aocolour}{A.O.: #1}}}
\newcommand{\vm}[1]{\ifdraft{\noindent\colorbox{vmcolour}{V.M.: #1}}}

\newcommand{\Z}{\mathbb{Z}}
\newcommand{\N}{\mathbb{N}}
\newcommand{\R}{\mathbb{R}}
\newcommand{\Q}{\mathbb{Q}}
\newcommand{\K}{\mathbb{K}}
\newcommand{\Cm}{\mathbb{C}}
\newcommand{\Pm}{\mathbb{P}}
\newcommand{\Zero}{\mathbb{O}}
\newcommand{\F}{\mathbb{F}_2}
\newcommand{\ilim}{\int\limits}
\newcommand{\impl}{\Rightarrow}
\newcommand{\set}[2]{\{ \, #1 \mid #2 \, \}}

\newcommand{\A}{\mathcal{A}}
\newcommand{\B}{\mathcal{B}}
\renewcommand{\C}{\mathcal{C}}

\theoremstyle{plain}
\newtheorem{thm}{Theorem}
\newtheorem{lm}{Lemma}
\newtheorem*{st}{Statement}
\newtheorem*{prop}{Property}
\newtheorem{prob}{Problem}
\newtheorem{idea}{Idea}
\newtheorem{conjecture}{Conjecture}

\newtheorem{oldtheorem}{Theorem}
\renewcommand{\theoldtheorem}{\Alph{oldtheorem}}
\newtheorem{oldconj}{Conjecture}
\renewcommand{\theoldconj}{\Alph{oldconj}}

\theoremstyle{definition}
\newtheorem{defn}{Definition}
\newtheorem{ex}{Example}
\newtheorem{cor}{Corollary}

\theoremstyle{remark}
\newtheorem*{rem}{Remark}
\newtheorem*{note}{Note}
\title{Why the equivalence problem for unambiguous grammars 
has not been solved back in 1966?}

\author{Vladislav Makarov\thanks{Saint-Petersburg State University. Supported by Russian Science Foundation, project 18-11-00100.}}

\maketitle

\begin{abstract}
In 1966, Semenov, by using a technique based on power series, 
suggested an algorithm that tells apart the languages described by an unambiguous grammar and a DFA. At the first glance, it may appear that the algorithm can be easily modified to yield a full solution of the equivalence problem for unambiguous grammars. This article shows why this hunch is, in fact, incorrect.
\end{abstract}

\section{Preface}

This Section contains some details about the technical structure of this paper
and the reasons for its existence. Therefore, feel free to skip it. Just remember that this
paper has a ``sequel'': ``Cocke--Younger--Kasami--Schwartz--Zippel algorithm
and its relatives''.

This paper and the companion paper ``Cocke--Younger--Kasami--Schwartz--Zippel algorithm
and its relatives'' are based on the Chapters 3 and 4 of my Master's thesis~\cite{makarov-master} respectively.
While the thesis is published openly in the \href{https://dspace.spbu.ru/handle/11701/30103?locale=en}{SPbSU system}, it has not been published
in a peer-reviewed journal (or via any other scholarly accepted publication method) yet.

These two papers are designed to amend the issue. As of the current date, I have not
submitted them to a refereed venue yet. Therefore, they are only published as arXiv
preprints for now. 

Considering the above, it should not be surprising that huge parts of the original text are copied almost
verbatim. However, the text is not totally the same as the Chapter 3 of my thesis. Some things
are altered for better clarity of exposition and there are even some completely new parts.

Why did I decide to split the results into two papers? There are two main reasons. 

Firstly, both papers are complete works by themselves. From the idea standpoint, some
of the methods and results of the companion paper were motivated by the careful
observation of results of this one. However, the main result of the companion paper
is stated and proven without any explicit references to the content of this paper.

The second reason is closely connected to the particular structure of this paper. 
Specifically, most of the results here are ``negative'' in the following sense.
There is a specific enticing approach to the equivalence problem for unambiguous grammars.
This approach \emph{does not work}, not without any serious modifications at least. However,
despite that, I have never seen any discussion about said approach in the previously published research
on the topic. This paper aims to fill this gap in the literature.

I realize that it \emph{may}
happen that all of the following is something that is already known to the specialists in the
field. However, I sincerely believe that this paper deserves to be made easily accessible
to the wider mathematical community. There are many known cases in the mathematics
when something is a ``folklore'' result that is never properly published and is only circulated
via informal communication between researchers. 

There are two most common cases
when such a thing can happen. The first one is a known error in some important
work in some research field. The second one is when there is an informal knowledge
that some approach to a problem does not work (usually, it is hard-to-impossible
to formalize such statements without losing most of their ``power''). 
In both these cases,
a young and inexperienced researcher that wants to start working in the field can waste
a lot of time: by searching for an error in the first case and by futilely searching for a way
to apply the known method in the problem. 

What makes the situation even more
difficult is that in both these cases it may be difficult to publish the result in a reasonably
prestigious peer-reviewed venue due to the lack of notability. Indeed, either way, the
result you are trying to publish is not new and can therefore be rejected. Which would
be fine if you could actually find any paper that contain the result you wanted to publish!
This creates a paradoxical situation when something is well-known, but is only circulated
via personal communication. 

As you may have guessed, I find the aforementioned situation saddenning. Hence, I
think that one still should publish all such ``negative'' results somehow. 
While a peer-reviewed venue would be ideal, a simple arXiv preprint that will never
become a ``proper'' paper due to being well-known is also OK. Hence, my second
reason for split publication can be explained in the following way.  By splitting the 
``negative'' and 	the ``positive'' results, I create a single paper that can be
properly published and a single paper with a less clear publication status, which
can still be shared with the international mathematical community via arXiv
even if it will not get a ``proper'' publication.

With technical details out of the way, let us move on to the mathematical part.

\section{The equivalence problem for unambiguous grammars}

Consider the following problem: 

\begin{prob}[The equivalence problem for unambiguous grammars.] \label{grail}
You are given two ordinary grammars $G_1$ and $G_2$. Moreover, you know  that
they are both unambiguous from a $100\%$ trustworthy source. Is there 
an algorithm to tell whether $L(G_1)$ and $L(G_2)$ are equal? Because the grammars
are guaranteed to be unambiguous, the algorithm may behave arbitrarily if either of $G_1$
and $G_2$ is ambiguous, including not terminating at all.
\end{prob} 
\begin{rem} The wording is so specific for a reason: it is undecidable to tell whether
given ordinary grammar is unambiguous or not. Hence, it is impossible to somehow plug in the verifier of unambiguity into the algorithm. In complexity-theoretic terms, this is a \emph{promise} problem.
However, these technicalities are not so important now, because we are nowhere near to the solution for them to matter.
\end{rem}

The same problem for arbitrary \emph{ordinary} grammars is undecidable, but all the known
proofs use extremely ambiguous grammars. 

Around the turn of the millenium, there was a major breakthrough on Problem~\ref{grail}: Senizergues proved that the equivalence
problem for deterministic grammars is decidable~\cite{dpda}.  
Still, most unambiguous grammars \emph{are not} deterministic. Moreover, Senizergues's proof is extremely long and difficult: the original
paper is 159 pages long! The proof relies on complicated arguments about deterministic
pushdown automata (the definitions of deterministic grammars and deterministic pushdown automata are out of the scope for this paper). Essentially, it shows that some very carefully constructed first-order theory is
complete. There were some simplifications since then~\cite{simpler-dpda}, but the proof still remains
complicated.

Personally, I believe that the answer to Problem~\ref{grail} should be ``Yes''. Hence,
I will look at the problem from a more algorithmic side. More or less all results of this and
the following papers have been inspired by this problem somehow, but their \emph{statements}
sometimes are not directly related to the Problem~\ref{grail}.

A naive approach to the Problem~\ref{grail} would be to enumerate all words of length
at most $f(G_1, G_2)$, where $f$ is some computable function, and, for each of them, 
check whether
it belongs to $L(G_1)$ and $L(G_2)$ by standard cubic-time parsing algorithm. However,
how should we choose the function $f$, exactly? For arbitrary 
ordinary grammars the corresponding problem is undecidable and, therefore,
the required function $f$ is uncomputable. On the other hand, an existence of such $f$ 
for unambiguous grammars is equivalent to the fact that Problem~\ref{grail}
is decidable. Indeed, if the problem is decidable, then we can check $G_1$ and $G_2$
for equivalence. If they are equivalent, then we can define $f(G_1, G_2)$ arbitrarily. 
Otherwise, just iterate over all words until you find the first difference between $L(G_1)$
and $L(G_2)$. So, looking at the problem in this way is not helpful at all.
\begin{defn} A word $w$ is a difference between languages $L_1$ and $L_2$ if $w$ is in 
one of them, but not in the other. The \emph{first difference} is the lexicographically smallest
of the shortest differences.
\end{defn}

The above approach works badly when the first difference between $L(G_1)$ and $L(G_2)$
has large length. Maybe, there is an approach that may work well even in the hypothetical
case when the first difference has \emph{very} large length?
And there is! Of course, it suffers from its own issues, and I am not even remotely close to solving Problem~\ref{grail},
but I think that the methods and the results I managed to obtain are interesting enough.

Without loss of generality, we can assume that $G_1$ and $G_2$ are in Chomsky normal
form and do not contain the empty word. For a grammar $G$ in Chomsky normal form,
define its \emph{size} $|G|$ simply as the number of rules in the grammar.

Equivalence of unambiguous grammars is directly related to the question of emptiness of a given
GF(2)-grammar. If the emptiness of a GF(2)-grammar is decidable, then so is the equivalence of unambiguous grammars. 

\section{Semenov's approach}

The methods of this paper are directly inspired by the way Semenov~\cite{semenov} approached a simpler case of Problem~\ref{grail}. 
Moreover, it may appear on a first glance, that a simple variation on the Semenov's idea actually solves the Problem~\ref{grail}, but the devil is in the details.

For the sake of completeness, I will explain the whole approach of Semenov here. It will not
take much space and will prove crucial later.
All the exposition in this Section is based on the Semenov's paper~\cite{semenov}. 

Let us prove the following theorem (in order to showcase all the necessary details the proof is not the
simplest one):
\begin{oldtheorem}[\cite{semenov}]\label{subset} Given two unambiguous grammars $G_1$ and $G_2$, such that $L(G_1) \subset L(G_2)$, one can algorithmically check whether $L(G_1) = L(G_2)$.
\end{oldtheorem}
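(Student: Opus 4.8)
The plan is to route everything through generating functions, in the spirit of the Chomsky--Sch\"utzenberger theorem. For a language $L$ over an alphabet $\Sigma$ write $f_L(x)=\sum_{n\ge 0}|L\cap\Sigma^n|\,x^n$. The first and conceptually decisive step is the observation that the hypothesis $L(G_1)\subseteq L(G_2)$ upgrades equality of counting functions to genuine equality of languages: if $L(G_1)\subseteq L(G_2)$ and $|L(G_1)\cap\Sigma^n|=|L(G_2)\cap\Sigma^n|$ for every $n$, then necessarily $L(G_1)=L(G_2)$. So the whole problem reduces to deciding whether $f_{L(G_1)}=f_{L(G_2)}$ as formal power series. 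It is worth emphasizing at once that this is the \emph{only} place where the inclusion is used, and that without it the two properties are incomparable --- which is, in a nutshell, the moral of this paper.

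Second, I would recall why $f_{L(G_i)}$ is both computable and algebraic. Remove useless nonterminals from each $G_i$ (this does not change the languages, and it makes every subgrammar unambiguous too), and for a nonterminal $A$ set $F_A(x)=\sum_n\#\{w\in\Sigma^n:A\Rightarrow^* w\}\,x^n$. Unambiguity guarantees that $F_A$ counts words rather than parse trees, and the productions turn into the polynomial system $F_A=\sum_{A\to BC}F_BF_C+x\cdot\#\{a:A\to a\}$, which has a unique power-series solution vanishing at the origin; its start-symbol component is $f_{L(G_i)}$. Effective elimination (iterated resultants) then produces a nonzero polynomial $P_i(x,y)\in\Q[x,y]$ with $P_i\bigl(x,f_{L(G_i)}(x)\bigr)=0$ and with $\deg P_i$ bounded explicitly in terms of $|G_i|$; and the coefficients of $f_{L(G_i)}$ themselves are produced by the counting version of the Cocke--Younger--Kasami recursion.

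Third and last, it remains to decide whether the algebraic power series $g:=f_{L(G_1)}-f_{L(G_2)}$ vanishes identically. Since the difference of two algebraic series is again algebraic, effectively so --- for instance one eliminates $z$ from $P_1(x,z)$ and $P_2(x,y+z)$ via $\operatorname{Res}_z\bigl(P_1(x,z),P_2(x,y+z)\bigr)$ --- I can compute a nonzero $Q(x,y)\in\Q[x,y]$ with $Q(x,g(x))=0$, and then write $Q(x,y)=y^k\tilde Q(x,y)$ with $\tilde Q(x,0)=\tilde q_0(x)\ne 0$ by dividing out the largest power of $y$ that divides $Q$. If $g\ne 0$, then $g^k\ne 0$ in the integral domain $\Q[[x]]$, so $\tilde Q(x,g)=0$, whence $\tilde q_0(x)=-g(x)\bigl(\tilde q_1(x)+\tilde q_2(x)g(x)+\cdots\bigr)$ is divisible by $x^{\operatorname{ord}g}$; therefore $\operatorname{ord}g\le\deg_x\tilde q_0=:N^\ast$, an explicitly computable number. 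The algorithm is now immediate: compute the first $N^\ast+1$ coefficients of $f_{L(G_1)}$ and of $f_{L(G_2)}$ and compare them; they coincide if and only if $g=0$, i.e.\ (by the first paragraph) if and only if $L(G_1)=L(G_2)$.

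The single genuinely delicate ingredient is the effective bound $N^\ast$, which rests on the classical quantitative fact that a nonzero algebraic power series vanishes at the origin to an order controlled by its minimal polynomial, together with making the chain ``grammar $\to$ polynomial $\to$ vanishing bound'' fully effective; the grammar cleanup needed to keep subgrammars unambiguous is a minor but real technicality. Everything else is routine, and in particular the reduction in the first paragraph is almost trivial. That asymmetry --- the hard mathematics being cheap, and the cheap-looking hypothesis $L(G_1)\subseteq L(G_2)$ doing the real work --- is exactly why this argument does not obviously settle Problem~\ref{grail}.
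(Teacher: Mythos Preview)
Your argument is correct (modulo a harmless sign slip in the resultant: you want $\operatorname{Res}_z\bigl(P_1(x,z),P_2(x,z-y)\bigr)$ so that $z=f_{L(G_1)}$ is the common root when $y=g$), but it is not the route the paper takes. The shared core is the first-paragraph reduction: under $L(G_1)\subseteq L(G_2)$, equality of ``counts'' forces equality of languages. From there the two proofs diverge. The paper keeps the full \emph{multivariate} commutative image $\comm(L)$ (one variable per letter), writes the grammar equations as in~\eqref{series_semenov}, and decides equality of the two power series by appealing to Tarski--Seidenberg on the real closed field: equality of analytic functions on a small ball is phrased as a universal first-order sentence over~$\R$. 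You instead collapse to the \emph{univariate} census series $f_L(x)$, invoke Chomsky--Sch\"utzenberger algebraicity, and extract an explicit bound $N^\ast$ on the order of vanishing of $g=f_{L(G_1)}-f_{L(G_2)}$, reducing the decision to comparing finitely many coefficients.

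What each buys: your approach is strictly more elementary (no real-closed-field decision procedure), gives a concrete termination bound, and is closer in spirit to the companion paper's ``compare enough coefficients'' philosophy. The paper's approach, on the other hand, is chosen deliberately: working with the multivariate series and the Tarski--Seidenberg framework is exactly what sets up Section~\ref{section_matrix}, where one replaces real scalars by real $d\times d$ matrices inside the \emph{same} first-order formula. Your univariate/algebraic-elimination route does not extend to that noncommutative setting, so while it proves Theorem~\ref{subset} cleanly, it would not motivate the matrix-substitution story that is the actual point of the paper.
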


The difference between Problem~\ref{grail} and Theorem~\ref{subset} is the very strong
condition that $L(G_1)$ is a subset of $L(G_2)$. To prove Theorem~\ref{subset}, let us 
follow the following simple plan:
\begin{enumerate}
\item[1.] Replace languages with formal power series by ignoring the order of letters.
Of course, this is not an equivalent transformation, but we will not lose too much information.
\item[2.] Translate grammars $L(G_1)$ and $L(G_2)$ into a system of polynomial equations 
over some power series.
\item[3.] Power series are equal if and only if they are equal in all points from an arbitrarily small neighbourhood of origin. Hence, we can replace polynomial equations for power series with
quantified polynomial equations for real numbers. By Tarski--Seidenberg theorem, the last
problem is decidable.
\end{enumerate}
 
To proceed with the first step of the plan, consider \emph{commutative images} of $L(G_1)$ and 
$L(G_2)$.
Informally, a commutative image of a language corresponds to interpreting a language as a sum
of its words and then forgetting that the letters do not actually commute. Formally, the definition
is the following:
\begin{defn} Let $L$ be a language over the alphabet $\Sigma = \{a_1, a_2, \ldots, a_k\}$. By its \emph{commutative image}
$\comm(L)$, I mean the formal power series over variables $a_1$, $a_2$, \ldots, $a_k$ 
(yes, the variables are the letters of $\Sigma$), with the
coefficient before $a_1^{d_1} a_2^{d_2} \ldots a_k^{d_k}$ being the number of words with exactly $d_i$ letters $a_i$ for each $1 \leqslant i \leqslant k$.
\end{defn}
Consider two examples:
\begin{ex}
$\comm(\{ab, ba, b^2 a, c^2 de, bab\}) = ab + ba + b^2 a + c^2 de + bab = 2ab + 2ab^2 + c^2 de$ 
\end{ex}
\begin{ex}
Let $L$ be the language of correct bracket sequences, but with letters $a$ and $b$
instead of symbols ``(`' and ``)'' for clarity. Then, 
$\comm(L) = \sum\limits_{n=0}^{+\infty} C_n a^n b^n$, where $C_n = \dfrac{(2n)!}{n! \cdot (n +1)!}$ are Catalan's numbers. Indeed, in this example, all words of length $2n$ from
the language contain $n$ letters $a$ and $n$ letters $b$.
\end{ex}

Very importantly, $f(K \sqcup L) = f(K) + f(L)$ and $f(K \cdot L) = f(K) f(L)$, as long as concatenation
$K \cdot L$ is unambiguous.

It turns out that comparing commutative images is actually \emph{much} easier than 
comparing the languages themselves:
\begin{oldtheorem}[\cite{semenov}]\label{comm_images} Given two unambiguous grammars $G_1$ and 
$G_2$, there is an algorithm for checking whether $\comm(L(G_1))$ and $\comm(L(G_2))$ are equal.
\end{oldtheorem}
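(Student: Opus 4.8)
The plan is exactly the three-step scheme sketched above, carried out in that order. \textbf{Step 1: from grammars to polynomial systems.} Using the reductions already granted, assume $G_1$ and $G_2$ are in Chomsky normal form and generate no empty word, with nonterminals $A_1,\dots,A_n$ (start symbol $A_1$) and $B_1,\dots,B_m$ (start symbol $B_1$). Introduce a variable $x_i$ for each $A_i$ and $y_j$ for each $B_j$, and write down the system
\[
x_i \;=\; \sum_{(A_i\to A_jA_k)\in G_1} x_j x_k \;+\; \sum_{(A_i\to a)\in G_1} a
\]
together with the analogous equations for the $y_j$'s. Unambiguity enters here, in its one essential place: since $G_1$ is unambiguous, the decomposition of $L(A_i)$ into the disjoint union of the products $L(A_j)\cdot L(A_k)$ over the rules $A_i\to A_jA_k$ (together with the terminal rules) is \emph{unambiguous}, so the identities $\comm(K\sqcup L)=\comm(K)+\comm(L)$ and $\comm(K\cdot L)=\comm(K)\comm(L)$ show that $(\comm(L(A_1)),\dots,\comm(L(A_n)))$ solves the first system in the ring of formal power series, and similarly for $G_2$. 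This power-series solution is moreover the \emph{unique} one with zero constant term, as one sees by computing its homogeneous components by induction on the degree: on the right-hand side the bilinear terms pair two strictly smaller positive degrees and the terminal terms live only in degree one.

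\textbf{Step 2: from power series to analytic functions.} Since there are at most $k^\ell$ words of length $\ell$ over a $k$-letter alphabet, each series $\comm(L(A_i))$ converges on the polydisc $\{\,|a_\ell|<1/k\,\}$ to a real-analytic function vanishing at the origin, and likewise for the $\comm(L(B_j))$'s. Two formal power series coincide if and only if the analytic functions they define agree on some neighbourhood of $0$, so it suffices to compare these functions near the origin. To single out, among \emph{all} real solutions of the polynomial system, the branch given by the commutative images, note that the map $x\mapsto x-(\text{right-hand side})$ has Jacobian equal to the identity at $(x,a)=(0,0)$ --- every monomial on the right is either quadratic in $x$ or free of $x$ --- so the implicit function theorem produces a neighbourhood of the origin on which the only solution $x$ close to $0$ is $x_i=\comm(L(A_i))(a)$; the same applies to $y$.

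\textbf{Step 3: reduce to the first-order theory of $\R$.} Let $\Phi_1(a,x)$ and $\Phi_2(a,y)$ denote the conjunctions of the two polynomial systems. Then $\comm(L(G_1))=\comm(L(G_2))$ holds if and only if the sentence
\[
\exists\,\varepsilon>0\;\;\forall a\,\forall x\,\forall y\;\;\Bigl(\textstyle\bigwedge_\ell|a_\ell|<\varepsilon\,\wedge\,\bigwedge_i|x_i|<\varepsilon\,\wedge\,\bigwedge_j|y_j|<\varepsilon\,\wedge\,\Phi_1(a,x)\,\wedge\,\Phi_2(a,y)\Bigr)\;\Longrightarrow\;x_1=y_1
\]
is true over $\R$, where $|t|<\varepsilon$ abbreviates $-\varepsilon<t<\varepsilon$. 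If the commutative images agree, Step 2 supplies an $\varepsilon$ --- small enough that the implicitly defined solutions stay inside the box $|x_i|,|y_j|<\varepsilon$ --- for which the antecedent forces $x_1=\comm(L(A_1))(a)=\comm(L(B_1))(a)=y_1$. If they differ, the nonzero power series $\comm(L(A_1))-\comm(L(B_1))$ is nonzero at points arbitrarily close to the origin, so no $\varepsilon$ can work. The displayed sentence belongs to the first-order language of ordered fields, so its truth is decidable by the Tarski--Seidenberg theorem, and running that decision procedure is the sought algorithm.

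The step I expect to be the genuine obstacle is Step 2: one has to be certain both that the polynomial system really is solved by the commutative images and that no spurious algebraic branch can contaminate the quantifier-elimination argument, which is precisely why the smallness constraints $|x_i|<\varepsilon$, $|y_j|<\varepsilon$ together with the Jacobian computation are load-bearing rather than decorative. Steps 1 and 3 are, by comparison, bookkeeping: the former is the only place unambiguity is used, and the latter is a routine appeal to quantifier elimination for real closed fields.
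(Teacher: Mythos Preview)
Your proof follows the same three-step scheme as the paper (grammar $\to$ polynomial system over the commutative images $\to$ Tarski--Seidenberg over $\R$), and it is correct. Two points of execution differ and are worth noting.

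First, the paper computes an \emph{explicit} rational $\varepsilon = 1/\bigl(|\Sigma|\cdot\max(|R_1|,|R_2|)^2\bigr)$ from a parse-tree count, so its first-order sentence is purely universal; you quantify $\varepsilon$ existentially, which is still handled by Tarski--Seidenberg but forfeits the drop into the computationally easier universal fragment the paper remarks on after Example~\ref{stupid}.

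Second, and more substantively, you take care---via the implicit function theorem and the extra constraints $|x_i|<\varepsilon$, $|y_j|<\varepsilon$---to exclude spurious real branches of the polynomial system, whereas the paper's ``Small'' conjunction bounds only the letter variables and not the nonterminal variables. Your caution here is genuinely load-bearing: already the grammar $S\to SS\mid a$ yields the equation $S=S^2+a$, which for small positive $a$ has \emph{two} real solutions, so the paper's sentence as literally written can be false even when the two commutative images coincide. In short, the same approach, but your version is the one whose equivalence ``statement $1\Leftrightarrow$ statement $2$'' actually goes through as stated.
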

\begin{proof}[Proof of Theorem~\ref{comm_images}] Without loss of generality, both $G_1 = (\Sigma, N_1, R_1, S_1)$
and $G_2 = (\Sigma, N_2, R_2, S_2)$ are in Chomsky's normal form and do not contain the empty word.

Consider one of the grammars, for example $G_1 = ( \Sigma, N_1, R_1, S_1)$.
For each nonterminal $C$ of $G_1$, define $f(C) \coloneqq \comm(L(C))$. 
Because $G_1$
is an unambiguous grammar, all concatenations are unambiguous and all unions are disjoint.
Therefore,
\begin{equation}\label{series_semenov}
f(C) = \sum\limits_{(C \to DE) \in R_1} (f(D) \cdot f(E)) + \sum\limits_{(C \to a) \in R_1} a
\end{equation}
Here, the sums range over all rules that correspond to the nonterminal $C$: the first sum is over
all ``normal'' rules $C \to DE$ and the second sum is over all ``final'' rules $C \to a$ with
$a \in \Sigma$ (recall that each element of $\Sigma$ can be interpreted as a variable).

This can be interpreted as a system of polynomial equations over ``indeterminates'' $f(C)$.
The \emph{real} values of $f(C)$ (that is, $\comm(L(C))$) satisfy those equations. 

Write down those systems for $G_1$ and $G_2$. We need to check whether $\comm(L(G_1)) = \comm(L(G_2))$ or, in other words, $f(S_1) = f(S_2)$. A grammar in Chomsky's normal form with $p$ rules has at most $p^{2n}$ parse trees for strings of length $n$~\cite[Lemma 1]{semenov}.
Hence, $f(S_1)$ and $f(S_2)$ both converge as power series of several variables in the interior of a ball with radius $1/(\max(|R_1|, |R_2|)^{2})$ and the center in the origin. In particular, they
both converge when all $|\Sigma|$ variables take values that are less than $\varepsilon \coloneqq 1/(\max(|R_1|, |R_2|)^2 \cdot |\Sigma|)$.

It is known that two formal power series are equal as series as long as they are equal as functions in all points of a small neighbourhood of origin. To check all of them at the same time, we can use	universal quantifiers over reals. To be precise, the following statements are equivalent:
\begin{itemize}
\item[1.] The commutative images $\comm(L(G_1))$ and $\comm(L(G_2))$ are equal.
\item[2.] For all ways to assign real values to the elements of $\Sigma \sqcup N_1 \sqcup N_2$
(the alphabet letters and the nonterminals of $G_1$ and $G_2$),
$(\mathrm{Small} \wedge \mathrm{Correctness}) \Rightarrow (S_1 = S_2)$. Here, by 
$\mathrm{Small}$ I mean the finite conjunction of conditions $|a| < \varepsilon$ for $a \in \Sigma$. 
By $\mathrm{Correctness}$, I mean that all internal grammar equations like Equation~\ref{series_semenov} are satisfied. Check the following Example~\ref{stupid} for better understanding.
\end{itemize}
Finally, telling whether the second statement is true or not is a special case of Tarski--Seidenberg
theorem about decidability of first-order theory of reals.
\end{proof}
\begin{ex}\label{stupid} Consider the two following simple unambiguous grammars over the alphabet $\Sigma = \{a, b\}$ that generate languages $\{a, ab\}$ and $\{a, ba\}$ with equal commutative images:
\begin{align*}
S_1 &\to A_1 B_1 \sqcup a			          &S_2 &\to B_2 A_2 \sqcup a \\
A_1 &\to a                                                &A_2 &\to a \\
B_1 &\to b                                                &B_2 &\to b 
\end{align*}
The corresponding quantified statement is $\forall a, b, S_1, A_1, B_1, S_2, A_2, B_2 \in \R \colon ((|a| < \varepsilon) \wedge (|b| < \varepsilon) \wedge (S_1 = A_1 B_1 + a) \wedge (A_1 = a) \wedge (B_1 = b) \wedge (S_2 = A_2 B_2 + a) \wedge (A_2 = a) \wedge (B_2 = b)) \Rightarrow (S_1 = S_2)$. Here, $\varepsilon = 1/(|\Sigma| \cdot \max(|R_1|, |R_2|)^2) = 1/(2 \cdot 4^2) = 1/32$ and $|a| < \varepsilon$ can be rewritten as $(a \cdot 32 < 1) \wedge (-a \cdot 32 < 1)$. So, in the end, this
is a first-order sentence over reals with only universal quantifiers. This special case is much easier than the general case of Tarski--Seidenberg theorem from the computational perspective~\cite{exists_real}.
\end{ex}

The remaining part of the proof is simple:

\begin{proof}[Proof of Theorem~\ref{subset}] 
If $L(G_1) = L(G_2)$, then $\comm(L(G_1)) = \comm(L(G_2))$.
Otherwise, $L(G_2)$ strictly contains $L(G_1)$ and at least one coefficient of $\comm(L(G_2))$
is strictly greater than the corresponding coefficient of $\comm(L(G_1))$. Hence, $L(G_1) = L(G_2)$ if and only if $\comm(L(G_1)) = \comm(L(G_2))$.
\end{proof}

\section{Matrix substitution and polynomial identities}\label{section_matrix}

Clearly, the argument from previous Section did not use commutativity 
of real number multiplication that much. What we used instead are some other
properties of real numbers: that the equality of power series over $\R$ follows from pointwise equality and that the first-order theory of real numbers is decidable.

So, we want to replace real numbers with something that captures noncommutativity
of string concatenation at least to some extent, but without losing the decidability property.
Matrices with real entries seem like a good middle ground: they do not commute, but
their addition and multiplication is defined by polynomial equations over their entries.

Indeed, if $A$, $B$ and $C$ are real $d \times d$ matrices, then $A = BC$, by definition, means
that $A_{i, j} = \sum\limits_{k=1}^d B_{i, k} C_{k, j}$ for all $i$ and $j$ from $1$ to $d$.
So, if $d$ is fixed, the condition $A = BC$ can be expressed as a conjunction of $d^2$ 
polynomial equations over \emph{real numbers}. Similarly, the condition 
$A = B + C$ is also a big conjunction in disguise. Finally, a good matrix equivalent
of $|A| < \varepsilon$ is ``$\ell^1$ norm of $A$ is less than $\varepsilon$'', or, in other
words, $\sum\limits_{i=1}^d \sum\limits_{j=1}^d |A_{i, j}| < \varepsilon$.

Hence, we can apply the same line of reasoning that we did before. Fix some number $d$, possibly depending on $G_1$ and $G_2$ (but in a computable way). 
Write down a first-order formula
akin to one from Example~\ref{stupid}, but with matrices instead of real numbers.  Then, split 
every equation into basic equations like $A = BC$ and $A = B + C$ by introducing extra variables.
Finally, replace each matrix variable with $d^2$ real variables corresponding to its entries
and replace equations like $A = BC$ and $A = B + C$ with big conjunctions, as seen above. 
The
result is still some universal first-order statement about real numbers. We can check whether
it is true or not.
This way, we have noncommutativity of matrices at our disposal, without sacrificing decidability.

Clearly, this approach can lead only to false positives (languages are different, but we could not tell them apart), but not to false negatives. A false positive for languages $L_1$ and $L_2$ corresponds
to the fact that $L_1 \neq L_2$, but $\sum\limits_{w \in L_1} X_{w_1} X_{w_2} \ldots X_{w_{|w|}} = \sum\limits_{w \in L_2}  X_{w_1} X_{w_2} \ldots X_{w_{|w|}}$ for any way to choose $|\Sigma|$ real matrices with small norm ~--- one matrix $X_a$ for each letter $a \in \Sigma$. After cancelling
out common words, we are left with nontrivial (that is, not $0 = 0$) equation
$\sum\limits_{w \in L_1 \setminus L_2} X_{w_1} X_{w_2} \ldots X_{w_{|w|}} - \sum\limits_{w \in L_2 \setminus L_1}  X_{w_1} X_{w_2} \ldots X_{w_{|w|}} = 0$. 
Finally, a known homogeneity-based argument~\cite[Chapter 4]{pi_rings} allows to ``split'' this single equation by degree to get a separate equality for each word length. Precisely, for all $n \geqslant 0$,
\begin{equation}\label{rep}
\sum\limits_{w \in (L_1 \setminus L_2) \cap \Sigma^n} X_{w_1} X_{w_2} \ldots X_{w_{n}} - \sum\limits_{w \in (L_2 \setminus L_1) \cap \Sigma^n}  X_{w_1} X_{w_2} \ldots X_{w_{n}} = 0
\end{equation}
\begin{defn} For a language $L$, its \emph{$n$-slice} is the language $\set{w}{w \in L, |w| = n}$ of all words from $L$ of length exactly $n$.
\end{defn}
If $L_1 \neq L_2$, then, for some $n$, their $n$-slices are different as well. Then, the corresponding Equation~\eqref{rep} of degree $n$ is nontrivial and, by homogeneity, is true for all real matrices
and not only those of small norm.

On the first glance, it appears that this is a solution of Problem~\ref{grail}.
Indeed, it seems intuitive that there is no \emph{single} nontrivial matrix equation that is true for \emph{all} $d \times d$ matrices for $d \geqslant 2$. However, this intuition is dead wrong.

\begin{oldtheorem}[Amitsur--Levitsky theorem~\cite{pi_rings}] For any $d \times d$ matrices $X_1$, $X_2$, \ldots, $X_{2d}$ over any commutative ring, 
\begin{equation}\label{amitsur_identity} 
\sum\limits_{\sigma \in S_{2d}} (-1)^{\sgn(\sigma)} X_{\sigma(1)} X_{\sigma(2)} \ldots X_{\sigma(2d)} = 0
\end{equation}
\end{oldtheorem}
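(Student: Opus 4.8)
\section*{Proof proposal}

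The plan is to give Rosset's proof via the Grassmann (exterior) algebra, which reduces the whole statement to the Cayley--Hamilton theorem. The first step is a routine but important reduction to the universal case. Write $S(X_1,\dots,X_{2d})$ for the left-hand side of~\eqref{amitsur_identity}. Each entry of $S(X_1,\dots,X_{2d})$ is a fixed polynomial, with integer coefficients, in the $2d\cdot d^2$ entries of $X_1,\dots,X_{2d}$. Hence it suffices to prove the identity when the $X_i$ are matrices of independent indeterminates over the field $K=\Q\bigl(t^{(i)}_{jk}\bigr)$: if all these polynomials vanish over $K$, they are the zero polynomial, and then $S(X_1,\dots,X_{2d})=0$ after substituting arbitrary elements of any commutative ring. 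In particular, from now on I may freely divide by $1,2,\dots,d$.

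Next I introduce the exterior algebra $\Lambda=\Lambda_K(e_1,\dots,e_{2d})$ on $2d$ generators (so $e_ie_j=-e_je_i$ and $e_i^2=0$) and form the single matrix $M=\sum_{i=1}^{2d}e_i\otimes X_i\in\Lambda\otimes_K M_d(K)=M_d(\Lambda)$. Expanding $M^{2d}$, a monomial $e_{i_1}\cdots e_{i_{2d}}$ survives only when $(i_1,\dots,i_{2d})$ is a permutation $\sigma$ of $(1,\dots,2d)$, and then it equals $(-1)^{\sgn\sigma}e_1\cdots e_{2d}$; therefore
\begin{equation*}
M^{2d}=(e_1e_2\cdots e_{2d})\otimes\Bigl(\sum_{\sigma\in S_{2d}}(-1)^{\sgn\sigma}X_{\sigma(1)}X_{\sigma(2)}\cdots X_{\sigma(2d)}\Bigr).
\end{equation*}
Since $e_1\cdots e_{2d}\neq 0$ spans a free rank-one $K$-submodule of $\Lambda$, proving the identity is \emph{equivalent} to proving $M^{2d}=0$.

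The key point is that $N:=M^2$ already has all of its entries in the \emph{even} part $\Lambda_0$ of $\Lambda$ (each entry is a $K$-combination of degree-two monomials), and $\Lambda_0$ is a \emph{commutative} ring. So $N\in M_d(\Lambda_0)$ and the Cayley--Hamilton theorem applies over the commutative ring $\Lambda_0$: $N$ is killed by its characteristic polynomial $\chi_N(t)=t^d-c_1t^{d-1}+\dots+(-1)^dc_d$ with $c_j\in\Lambda_0$. I claim every $c_j=0$. Over a $\Q$-algebra, Newton's identities express the $c_j$ in terms of the power sums $s_k=\operatorname{tr}(N^k)=\operatorname{tr}(M^{2k})$ for $k\le d$, so it is enough to show $\operatorname{tr}(M^{2k})=0$. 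Expanding, $M^{2k}=\sum_{\mathbf{i}}(e_{i_1}\cdots e_{i_{2k}})\otimes(X_{i_1}\cdots X_{i_{2k}})$, where only tuples $\mathbf{i}$ with pairwise distinct entries contribute; applying the cyclic shift $\mathbf{i}\mapsto(i_2,\dots,i_{2k},i_1)$ leaves $\operatorname{tr}(X_{i_1}\cdots X_{i_{2k}})$ unchanged but multiplies $e_{i_1}\cdots e_{i_{2k}}$ by $(-1)^{2k-1}=-1$, so the whole sum equals its own negative and hence vanishes (we are in characteristic $0$). Thus $\chi_N(t)=t^d$, Cayley--Hamilton gives $N^d=0$, i.e.\ $M^{2d}=0$, and the theorem follows.

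The only genuinely inspired move is the passage to the Grassmann algebra: once one notices that $M^2$ lives over a \emph{commutative} ring, the rest is Cayley--Hamilton plus essentially bookkeeping (the relevant traces die by cyclic symmetry, and Newton's identities convert this into $\chi_N=t^d$). I expect the main obstacle to be keeping the coefficient-of-characteristic reduction honest: the trace computation and Newton's identities both need division by small integers, so one must secure the universal-matrices reduction to characteristic zero \emph{before} running the argument, which is precisely why that step is placed first. (Alternative proofs exist --- the original combinatorial argument of Amitsur--Levitsky, Swan's graph-theoretic proof, Kostant's exterior-algebra argument --- but the route above is the shortest.)
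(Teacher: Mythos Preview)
Your argument is correct: this is precisely Rosset's proof, and every step is sound. The reduction to generic matrices over $\Q$ legitimizes the later divisions; the identification $M^{2d}=(e_1\cdots e_{2d})\otimes S(X_1,\dots,X_{2d})$ is immediate from the anticommutation relations; $N=M^2$ indeed lands in $M_d(\Lambda_0)$ with $\Lambda_0$ commutative, so Cayley--Hamilton is available; and the cyclic-shift computation kills $\operatorname{tr}(M^{2k})$, after which Newton's identities (legitimate over a $\Q$-algebra) force $\chi_N(t)=t^d$.

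There is, however, nothing in the paper to compare against. The paper does not prove the Amitsur--Levitsky theorem: it merely quotes the statement with a citation to Drensky's book~\cite{pi_rings} and then uses its existence to explain why naive matrix substitution cannot decide Problem~\ref{grail}. So your write-up actually supplies strictly more than the paper does on this point; any of the alternative proofs you mention (the original combinatorial one, Swan's, Kostant's) would have served equally well for the paper's purposes.
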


Things like the left-hand side of the Equation~\eqref{amitsur_identity} are called
\emph{polynomial identities}. Formally,

\begin{defn} A polynomial $p$ in $n$ \emph{noncommuting} variables
is a \emph{polynomial identity} for $d \times d$ matrices if and only if $p(A_1, A_2, \ldots, A_n) = 0$
for any $d \times d$ matrices $A_1$, $A_2$, \ldots, $A_n$. 
\end{defn}
\begin{note} Polynomial identities are slightly misleadingly named, because usually the word
``polynomial'' refers to polynomials in commuting variables. However, this is standard terminology.
\end{note}

Moreover, polynomial identities are common enough~\cite[Chapter 3]{pi_rings} to make ruling them out one-by-one impossible. Of course, we are only interested  in polynomial identities
where the coefficients before each monomial is in the set $\{-1, 0, +1\}$ (only thise can arise from comparing unambiguous grammars), but there still is quite a lot of those. 

\section{What \emph{can} we do with matrix substitution?}

Of course, not everything is so bleak. Firstly, it is reasonable to expect that polynomial identities for large $d$ are pretty complicated and will not appear accidentally. 
This means that matrix substitution with 
small constant $d$ is a very good heuristic for Problem~\ref{grail}. For even better results, handle
all small lengths of possible differences with the main theorem of the companion paper.

It is not just a heuristic, though. The simplest possible measure of ``complicatedness'' is the number of monomials. And, indeed, it is known that all polynomial identities for $d \times d$
matrices must contain at least $2^{d-1}$ noncommutative monomials~\cite[Theorem 2]{many_monomials}. 
\begin{defn} Languages $L_1$ and $L_2$ are \emph{$d$-similar} if matrix substitution with $d \times d$
matrices cannot tell them apart. In particular, equal languages are $d$-similar for all $d$.
\end{defn}
The above formalization of ``complicatedness'' immediately leads to the following result:
\begin{thm}\label{bounded_diff} Let $G_1$ and $G_2$ be unambiguous grammars over the alphabet $\Sigma$, satisfying conditions $L(G_1) \neq L(G_2)$ and $|(L(G_1) \triangle L(G_2)) \cap \Sigma^n| < 2^{d-1}$ for all $n \geqslant 0$. Then, $L(G_1)$ and $L(G_2)$ \emph{are not} $d$-similar.
\end{thm}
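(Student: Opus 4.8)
The plan is to observe that Theorem~\ref{bounded_diff} is essentially a direct corollary of the monomial-count lower bound \cite[Theorem 2]{many_monomials}, once the notion of $d$-similarity is unwound. Recall the discussion around Equation~\eqref{rep}: by the homogeneity argument, two languages are $d$-similar if and only if, for \emph{every} length $n \geqslant 0$, the left-hand side of Equation~\eqref{rep} — call it $P_n$, a polynomial in the noncommuting variables $\{X_a : a \in \Sigma\}$ — is a polynomial identity for $d \times d$ matrices. The ``small norm'' restriction inherited from the Tarski--Seidenberg construction can be dropped here for free: scaling all the $X_a$ by a common positive factor $t$ multiplies $P_n$ by $t^n$, so $P_n$ vanishes on all matrices of small $\ell^1$-norm exactly when it vanishes on all $d \times d$ matrices.

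Next I would nail down two bookkeeping facts about $P_n$. First, each $n$-slice of a grammar language is finite, so $P_n$ is a genuine polynomial, homogeneous of degree $n$. Second, distinct words of length $n$ produce distinct monomials $X_{w_1} \ldots X_{w_n}$, and the index sets $(L(G_1) \setminus L(G_2)) \cap \Sigma^n$ and $(L(G_2) \setminus L(G_1)) \cap \Sigma^n$ are disjoint; hence there is no cancellation, every monomial occurring in $P_n$ has coefficient $+1$ or $-1$, and the number of monomials occurring is exactly $|(L(G_1) \triangle L(G_2)) \cap \Sigma^n|$. Now, since $L(G_1) \neq L(G_2)$, there is some $n$ with $(L(G_1) \triangle L(G_2)) \cap \Sigma^n \neq \varnothing$; fix such an $n$. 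For it, $P_n$ is nonzero and has fewer than $2^{d-1}$ monomials, so by \cite[Theorem 2]{many_monomials} it cannot be a polynomial identity for $d \times d$ matrices. Thus some choice of $d \times d$ matrices distinguishes the two languages, i.e.\ $L(G_1)$ and $L(G_2)$ are not $d$-similar.

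The one point that needs a little care — and the closest thing here to an ``obstacle'' — is matching the ground ring: $d$-similarity is phrased in terms of \emph{real} matrices, whereas the cited monomial bound is typically proved over an infinite or algebraically closed field. This gap is harmless. If $P_n$ vanished on all real $d \times d$ matrices, then, expanding entrywise, each of the $d^2$ coordinate polynomials in the $d^2 |\Sigma|$ real entries would vanish identically, and therefore $P_n$ would also vanish on $d \times d$ matrices over every commutative $\R$-algebra (in particular over $\Cm$), contradicting \cite[Theorem 2]{many_monomials}. Beyond this remark, no further work is required: the only substantive ingredients are the non-cancellation observation and the homogeneity reduction, both of which were already set up in Section~\ref{section_matrix}.
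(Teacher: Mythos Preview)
Your proposal is correct and follows essentially the same argument as the paper: pick an $n$ where the $n$-slices differ, observe that the resulting noncommutative polynomial $P_n$ has exactly $|(L(G_1)\triangle L(G_2))\cap\Sigma^n|<2^{d-1}$ monomials with no cancellation, and invoke \cite[Theorem 2]{many_monomials}. Your write-up is more detailed---spelling out the non-cancellation, the homogeneity reduction dropping the small-norm restriction, and the real-versus-algebraically-closed ground ring issue---but the underlying route is identical to the paper's two-line proof.
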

\begin{proof} Because $L(G_1) \neq L(G_2)$,  their $n$-slices are different for some $n$. Then, the corresponding 
polynomial identity~\ref{rep} has $|(L(G_1) \setminus L(G_2)) \cap \Sigma^n| + |(L(G_2) \setminus L(G_1)) \cap \Sigma^n| = |(L(G_1) \triangle L(G_2)) \cap \Sigma^n| < 2^{d-1}$ monomials and cannot be an identity for $d \times d$ matrices.
\end{proof}
\begin{rem} In fact, we just proved a slightly stronger, but more awkward statement: $L(G_1)$ and $L(G_2)$ cannot be $d$-similar if there \emph{exists} such $n$, that $n$-slices of $L(G_1)$ and $L(G_2)$ differ, but in less than $2^{d-1}$ strings.
\end{rem}

The statement of Theorem~\ref{bounded_diff} is interesting in the following way: normally, one would
expect that the case of close languages to be the hardest one for Problem~\ref{grail}. However, this is not the case: as Theorem~\ref{bounded_diff} shows, the languages can be \emph{too close} to be $d$-similar! Hence, this immediately leads to a solution of Problem~\ref{grail} for close languages

\textbf{Warning!} The rest of this paper is highly speculative in a sense that it details a possible way to solve Problem~\ref{grail}, but there are big obstacles for pretty much every step of the plan. If it is not your cup of tea, it makes sense to skip directly to the next CSection. However, if you choose to do so, you probably may want to skip to the conclusion (Section~\ref{chap_concl}).

Of course, there is no reason for $d$ to be a constant. It may depend on $G_1$ and $G_2$, but in a computable way. This leads to a, admittedly, extremely incomplete plan of attack on Problem~\ref{grail}. We will need the notion of T-ideal. Informally, T-ideals are closer under taking consequences. Formally,
\begin{defn} A set $I$ of nonncommutative polynomials is a T-ideal, if and only if
\begin{itemize} 
\item $0 \in T$,
\item for any $p, q \in T$, their sum $p + q$ is also in $T$,
\item for any $p \in T$ and any (not necessarily from $T$) noncommutative polynomial $q$, both their product $pq$ and $qp$ are in $T$,
\item for any noncommutative polynomial $p \in T$ in $n$ variables, and any (not necessarily from $T$) noncommutative polynomials $q_1$, $q_2$, \ldots, $q_n$, the result $p(q_1, q_2, \ldots, q_n)$ of substituting $q_i$ in place of variables is also in $T$. For example, if $X_1 X_2 - X_2 X_1$ is in $T$,
then $(AB + A)(A + BA) - (A + BA)(AB + A) = (ABA + ABBA + AA + ABA) - (AAB + AA + BAAB + BAA) = (ABBA - BAAB) + (2ABA - BAA - AAB)$ is also in $T$.
\end{itemize}
The T-ideal $I$ is generated by the set $X$, if $I$ is the smallest T-ideal that contains $X$ as a subset.
\end{defn}

For simplicity, let us assume the following well-known conjecture about the structure of polynomial
identities for matrices:
\begin{oldconj}[Razmyslov's conjecture~\cite{conjecture_structure}] \label{like_det} All polynomial identities for $d \times d$ matrices lie in a T-ideal generated by
Amitsur-Levitsky identity (Equation~\eqref{amitsur_identity}) and the following identity:
\begin{equation}\label{other_basis}
\sum\limits_{\sigma \in S_d} (-1)^{\sgn(\sigma)} [X_1^{\sigma(1)}, X_2] [X_1^{\sigma(2)}, X_2] \ldots [X_1^{\sigma(d)}, X_2] = 0,
\end{equation}
where $[A, B]$ denotes the commutator of $A$ and $B$: $[A, B] \coloneqq AB - BA$.
\end{oldconj}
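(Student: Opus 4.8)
Strictly speaking, Conjecture~\ref{like_det} is one of the long-standing open problems of the theory of polynomial identities, so I would not honestly expect to \emph{prove} it outright — which is precisely why the paper takes it as an assumption rather than a result. What I would try to do is reduce it to a question purely about the representation theory of the symmetric group, following the standard machinery for T-ideals over a field $\K$ of characteristic zero~\cite{pi_rings}. First I would pass to \emph{multilinear} identities: in characteristic zero every T-ideal is generated by its multilinear elements, and by Kemer's solution of the Specht problem the T-ideal of identities of $M_d(\K)$ is finitely generated, so the whole content of the conjecture is the \emph{identification} of an explicit finite set of generators. For each $n$ the multilinear identities of degree $n$ form a left $S_n$-submodule of the space of all multilinear polynomials, and the plan is to compare, irreducible component by irreducible component (i.e.\ at the level of cocharacters), this module for $M_d(\K)$ with the degree-$n$ multilinear part of the T-ideal $T_0$ generated by the Amitsur--Levitsky identity~\eqref{amitsur_identity} and the identity~\eqref{other_basis}. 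Since $T_0$ is automatically contained in the T-ideal of $M_d(\K)$ (both generators are genuine identities of $d\times d$ matrices), equality of the two cocharacter sequences for all $n$ would give the conjecture.

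Concretely, I would first pin down the small cases, both as a sanity check and as a source of bookkeeping that ought to generalize. For $d=1$ the statement degenerates to the fact that all identities of a commutative ring follow from $X_1X_2-X_2X_1$, and the listed identities specialize to consequences of commutativity. For $d=2$, identity~\eqref{amitsur_identity} becomes the standard identity $s_4$, while identity~\eqref{other_basis} becomes an identity that, together with $s_4$, is equivalent to the Hall identity $[[X_1,X_2]^2,X_3]=0$; that the T-ideal of $M_2(\K)$ is generated by exactly these is a known theorem (Razmyslov, Drensky), and I would want to re-derive it in the cocharacter-matching language so the same argument template is available for larger $d$. I would then try to run an induction on $d$, using trace identities (Procesi--Razmyslov: every trace identity of $M_d$ is a consequence of the Cayley--Hamilton theorem) to produce candidate generators and then translating them back into ordinary identities via the usual trace-elimination substitutions, checking that everything produced lies in $T_0$.

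The hard part — and the reason this is a conjecture and not a theorem — is exactly that last step: for $d\geqslant 3$ nobody knows the cocharacter sequence of $M_d(\K)$ in closed form, and, worse, nobody knows how to prove that the two explicit identities above \emph{generate} the full T-ideal rather than some proper sub-T-ideal. All the available tools (Young-diagram degree bounds in the spirit of Regev, the Razmyslov--Procesi bridge to trace identities, Kemer's representability machinery) give strong structural and asymptotic information but stop short of this explicit-generation statement. So my realistic expectation is that a self-contained proof is out of reach with present techniques; for the purposes of this paper it suffices to record that \emph{if} Conjecture~\ref{like_det} holds, the plan of attack on Problem~\ref{grail} below may quote it, and that an unconditional resolution would remove only one of several obstacles on that plan.
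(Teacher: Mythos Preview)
Your assessment is correct: the paper does \emph{not} prove this statement at all --- it is explicitly labelled a \emph{conjecture} (Razmyslov's conjecture, cited from~\cite{conjecture_structure}) and is only introduced in order to be \emph{assumed} in the speculative plan of Idea~\ref{kill_grammars}. There is therefore nothing to compare your proposal against; you have correctly recognised that no proof is expected here, and your surrounding discussion (reduction to multilinear identities, cocharacter comparison, the $d=2$ case being a theorem of Razmyslov and Drensky, and the obstruction for $d\geqslant 3$) is accurate and appropriate context, though strictly more than the paper itself provides.
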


\begin{defn}\label{noncomm_det} For a $n \times n$ matrix $X$ with not necessarily commuting entries its \emph{noncommutative determinant} is defined as $\sum\limits_{\sigma \in S_n} X_{1, \sigma(1)} X_{2, \sigma(2)} \ldots X_{n, \sigma(n)}$.
 \end{defn}

Visually, both Equation~\eqref{amitsur_identity} and~\eqref{other_basis} resemble the definition
of determinant. The Definition~\ref{noncomm_det} suggests that it is possible to give a useful
interpretation to this similarity. Denote the left-hand side of Equation~\eqref{amitsur_identity} by $h_1 = h_1 (X_1, \ldots, X_{2d})$ and the left-hand side of Equation~\eqref{other_basis} by $h_2 = h_2 (X_1, X_2)$.
Hence, we can use the following idea (I do not care about time complexity here,
because there are too many obstacles even when decidability only is concerned):
\begin{idea}\label{kill_grammars} Suppose that $L(G_1) \neq L(G_2)$. Then, let $\ell$ be the length of the first difference between $L(G_1)$ and $L(G_2)$. Then, let $d_{\textrm{max}}$ be the maximal such $d$, that $\ell$-slices of $L(G_1)$ and $L(G_2)$ are  $d$-similar. By Conjecture~\ref{like_det}, 
the identity for $\ell$-slices can be written as $s_1 h_1 (p_{1, 1} \cdot \ldots p_{1, 2d_{\textrm{max}}}) \cdot r_1 + s_2 \cdot h_1 (p_{2, 1} \ldots p_{2, 2d_{\textrm{max}}}) \cdot r_2 + \ldots +  s_k \cdot h_1 (p_{k, 1}, \ldots, p_{k, 2d_{\textrm{max}}}) \cdot r_k + \ell_{k+1} \cdot h_2 (q_{1, 1}, q_{1, 2}) \cdot r_{k+1} + \ldots + \ell_{k+m} \cdot h_2 (q_{m,1}, q_{m, 2}) \cdot r_{k + m}$, where $s_i$, $r_i$, $p_{i, j}$ are some noncommutative polynomials and $k$ and $m$ are some nonnegative integers.
There are three possible cases:
\begin{enumerate}
\item[1.] $d_{\textrm{max}}$ is small, say, $d_{\textrm{max}} < 2^{2^{|G_1| + |G_2|}}$. In this case, matrix substitution with $d \coloneqq 2^{2^{|G_1| + |G_2|}}$ works. Hence, proving that
the other two cases cannot actually happen solves Problem~\ref{grail}.
\item[2.] $d_{\textrm{max}}$ is large, but not when compared to $\ell$. Say, $10d_{\textrm{max}}^{10} < \ell$ and $d_{\textrm{max}} \coloneqq 2^{2^{|G_1| + |G_2|}}$. In this case, we can try to apply pumping lemma or a similar style argument. Specifically, each monomial from the our polynomial identity has large length, but splits up into \emph{huge} chunks that correspond to a monomial in one of $p_{i, j}$, $q_{i, j}$, $s_i$ or $r_i$. Hence, it is possible to pump the internals of those big chunks pretty much separately. Therefore, we get a lot of possibilities to ``disrupt'' the polynomial identities in $m$-slices with $m > \ell$. Unfortunately, I do not know any good way of implementing this idea.
\item[3.] $d_{\textrm{max}}$ is large, and is comparable to $\ell$.  Say, $10d_{\textrm{max}}^{10} \geqslant \ell$ and $d_{\textrm{max}} \coloneqq 2^{2^{|G_1| + |G_2|}}$. In this case, recall that the identity is a sum of $k + m$ detrerminant-like things. If $k + m = 1$ and all polynomials
$p_{i, j}$, $q_{i, j}$ satisfy some technical requirements, it is possible to extract a small arithmetic circuit for noncommutative determinant out of the grammars $G_1$ and $G_2$ (and small arithmetic circuits for noncommutative determinant are extremely unlikely to exist, because noncommutative permanent is $\mathrm{\#P}$-complete~\cite[Theorem 3.5]{determ}). I believe that it should be
possible to extend the technique to the case of small $k + m$,
 but, again, I do not know what
to do when $k + m$ is large, for example, $k + m > 2^{d_{\textrm{max}} / 10}$.
\end{enumerate}
\end{idea}

In the end, there are some major obstacles to both steps of the plan (proving the impossibility of situations 2 and 3 in the above). Ideally, we need a way to restrict our consideration only to
``simple enough'' polynomial identities (both Equation~\eqref{amitsur_identity} and Equation~\eqref{other_basis} are simple enough by themselves, but their consequences are not). Then, everything would work out in the end.

\section{Conclusion}\label{chap_concl}

Back in 1966, Semenov solved an important special case of Problem~\ref{grail} ---
the case when one of the grammars is a regular grammar. From the ideological 
standpoint, his approach is pretty simple. Hence, it may appear at the first glance
that  a simple modification of the method will lead to the solution of the full problem.

However, this is \emph{very much} not the case because of the relative prevalence
of matrix polynomial identities. Despite that, we still can get some partial results like Theorem~\ref{bounded_diff} more-or-less directly from the matrix substitution method. 
Moreover, I propose a potential way to ``attack'' Problem~\ref{grail}. Admittedly,
the plan appears to be not very realistic: it \emph{both} relies on a several unproven
conjectures \emph{and} has a lot of extremely unclear steps that require
developing completely new methods in order to be completed.

This is all for the ``negative'' results for now. The ``sequel'' paper presents some
``positive'' results that are inspired by the same ideas, but are different enough
to the point you can understand the majority of the companion paper without
reading this one.

\end{document}